\def\UrlSpecials{\do\~{\kern -.15em\lower .7ex\hbox{~}\kern .04em}} \catcode`~=13 
\newcommand{\nn}{\nonumber}
\newcommand{\calC}{\mathcal{C}}
\newcommand{\calL}{\mathcal{L}}
\newcommand{\calN}{\mathcal{N}}
\newcommand{\calP}{\mathcal{P}}
\newcommand{\calU}{\mathcal{U}}
\newcommand{\calV}{\mathcal{V}}
\newcommand{\calX}{\mathcal{X}}
\newcommand{\calY}{\mathcal{Y}}
\newcommand{\ba}{\mathbf{a}}
\newcommand{\bA}{\mathbf{A}}
\newcommand{\bb}{\mathbf{b}}
\newcommand{\bD}{\mathbf{D}}
\newcommand{\bI}{\mathbf{I}}
\newcommand{\bj}{\mathbf{j}}
\newcommand{\bL}{\mathbf{L}}
\newcommand{\bp}{\mathbf{p}}
\newcommand{\bR}{\mathbf{R}}
\newcommand{\bT}{\mathbf{T}}
\newcommand{\bu}{\mathbf{u}}
\newcommand{\bU}{\mathbf{U}}
\newcommand{\bv}{\mathbf{v}}
\newcommand{\bV}{\mathbf{V}}
\newcommand{\bx}{\mathbf{x}}
\newcommand{\bX}{\mathbf{X}}
\newcommand{\by}{\mathbf{y}}
\newcommand{\bY}{\mathbf{Y}}
\newcommand{\bz}{\mathbf{z}}
\newcommand{\rmd}{\mathrm{d}}
\newcommand{\bbE}{\mathbb{E}}
\newcommand{\bbR}{\mathbb{R}}
\DeclareMathAlphabet{\mathbsf}{OT1}{cmss}{bx}{n}
\DeclareMathAlphabet{\mathssf}{OT1}{cmss}{m}{sl}
\DeclareSymbolFont{bsfletters}{OT1}{cmss}{bx}{n}  
\DeclareSymbolFont{ssfletters}{OT1}{cmss}{m}{n}
\DeclareMathSymbol{\bsfGamma}{0}{bsfletters}{'000}
\DeclareMathSymbol{\ssfGamma}{0}{ssfletters}{'000}
\DeclareMathSymbol{\bsfDelta}{0}{bsfletters}{'001}
\DeclareMathSymbol{\ssfDelta}{0}{ssfletters}{'001}
\DeclareMathSymbol{\bsfTheta}{0}{bsfletters}{'002}
\DeclareMathSymbol{\ssfTheta}{0}{ssfletters}{'002}
\DeclareMathSymbol{\bsfLambda}{0}{bsfletters}{'003}
\DeclareMathSymbol{\ssfLambda}{0}{ssfletters}{'003}
\DeclareMathSymbol{\bsfXi}{0}{bsfletters}{'004}
\DeclareMathSymbol{\ssfXi}{0}{ssfletters}{'004}
\DeclareMathSymbol{\bsfPi}{0}{bsfletters}{'005}
\DeclareMathSymbol{\ssfPi}{0}{ssfletters}{'005}
\DeclareMathSymbol{\bsfSigma}{0}{bsfletters}{'006}
\DeclareMathSymbol{\ssfSigma}{0}{ssfletters}{'006}
\DeclareMathSymbol{\bsfUpsilon}{0}{bsfletters}{'007}
\DeclareMathSymbol{\ssfUpsilon}{0}{ssfletters}{'007}
\DeclareMathSymbol{\bsfPhi}{0}{bsfletters}{'010}
\DeclareMathSymbol{\ssfPhi}{0}{ssfletters}{'010}
\DeclareMathSymbol{\bsfPsi}{0}{bsfletters}{'011}
\DeclareMathSymbol{\ssfPsi}{0}{ssfletters}{'011}
\DeclareMathSymbol{\bsfOmega}{0}{bsfletters}{'012}
\DeclareMathSymbol{\ssfOmega}{0}{ssfletters}{'012}
\newcommand{\bLambda}{\bm{\Lambda}}
\newcommand{\bDelta}{\bm{\Delta}}
\newcommand{\eps}{\varepsilon}
\DeclareMathOperator{\var}{\mathsf{Var}}
\DeclareMathOperator{\cov}{\mathsf{Cov}}
\DeclareMathOperator{\rank}{rank}
\newcommand{\bzero}{\mathbf{0}}
\newcommand{\bone}{\mathbf{1}}
\newtheorem{theorem}{Theorem}
\newtheorem{definition}{Definition}
\newcommand{\qednew}{\nobreak \ifvmode \relax \else
      \ifdim\lastskip<1.5em \hskip-\lastskip
      \hskip1.5em plus0em minus0.5em \fi \nobreak
      \vrule height0.75em width0.5em depth0.25em\fi}
\newcommand{\Tlhat}{\hat{\bT}_{-}}
\newcommand{\Tlhati}{\hat{T}_{-,1}}
\newcommand{\Tlhatii}{\hat{T}_{-,2}}
\newcommand{\Trhat}{\hat{\bT}_{+}}
\newcommand{\Trhati}{\hat{T}_{+,1}}
\newcommand{\Trhatii}{\hat{T}_{+,2}}
\newcommand{\Tleft}{{\bT}_{-}}
\newcommand{\Tright}{{\bT}_{+}}
\newcommand{\cl}{\mathrm{cl}}
\newcommand{\openone}{\mathds{1}}
\begin{document} 

\title{Second-Order Asymptotics for the Discrete  \\ Memoryless MAC with Degraded Message Sets} 

\author{
 \IEEEauthorblockN{Jonathan Scarlett}
  \IEEEauthorblockA{Laboratory for Information and Inference Systems\\
    \'Ecole Polytechnique F\'ed\'erale de Lausanne \\
    Email: \url{jmscarlett@gmail.com} }
  \and 
  \IEEEauthorblockN{Vincent Y. F. Tan}
  \IEEEauthorblockA{Dept.\ of Elec.\ and Comp.\ Eng.\ and Dept.\ of Mathematics,  \\
       National University of Singapore \\
   Email: \url{vtan@nus.edu.sg}}
} 
 

\maketitle
 
\begin{abstract}
    This paper studies  
    the second-order asymptotics of the discrete memoryless multiple-access channel with 
    degraded message sets. For a fixed  average error probability $\eps \in (0,1)$ 
    and an arbitrary point on the boundary of the capacity region, we characterize 
    the speed of convergence of rate pairs that converge to that point 
    for codes that have asymptotic error probability no larger than $\eps$,
    thus complementing an analogous result given previously for the Gaussian setting.
\end{abstract} 
 
\section{Introduction}


In recent years, there has been great interest in characterizing the fixed-error  
asymptotics (e.g. dispersion, the Gaussian approximation) 
of source coding and channel coding problems, and the behavior is well-understood
for a variety of single-user settings \cite{Strassen,Hayashi09,PPV10}.  
On the other hand, analogous studies of multi-user
problems have generally had significantly less success, with the main exceptions being
Slepian-Wolf coding \cite{TK12,Nom14}, the Gaussian interference channel with strictly very strong
interference \cite{LeTan14}, and the Gaussian multiple-access channel (MAC) with degraded 
message sets~\cite{ScaTan13}.

In this paper, we complement our work on the latter problem by considering its  discrete
counterpart. By obtaining matching achievability and converse results, we provide
the first complete characterization of the second-order 
asymptotics for a discrete channel-type network information theory problem. 

\subsection{System Setup}

We consider the two-user discrete memoryless MAC (DM-MAC) with degraded message 
sets \cite[Ex.~5.18]{elgamal}, with input alphabets $\calX_1$ and $\calX_2$ and 
output alphabet $\calY$.  As usual, there are two messages $m_1$ and $m_2$,
equiprobable on the sets $\{1,\dotsc,M_1\}$ and $\{1,\dotsc,M_2\}$ respectively. 
The first user knows both messages, whereas the second user only knows $m_2$.  Given
these messages, the users transmit the codewords $\bx_1(m_1,m_2)$ and $\bx_2(m_2)$ from their 
respective codebooks, and the decoder receives a noisy output sequence which is generated according
to the memoryless transition law $W^n(\by|\bx_1,\bx_2) = \prod_{i=1}^{n}W(y_i|x_{1,i},x_{2,i})$.  
An estimate $(\hat{m}_1,\hat{m}_2)$ is formed, and an error is said to have occurred if
$(\hat{m}_1,\hat{m}_2) \ne (m_1,m_2)$.

The capacity region $\calC$ is given by the set of rate pairs $(R_1,R_2)$ satisfying \cite[Ex. 5.18]{elgamal}
\begin{align}
    R_1       \le I(X_1;Y|X_2) \label{eqn:cap_region1} \\
    R_1 + R_2 \le I(X_1,X_2;Y) \label{eqn:cap_region2}
\end{align} 
for some  input joint distribution $P_{X_1X_2}$, where the mutual information quantities are 
with respect to $P_{X_1X_2}(x_1,x_2)W(y|x_1,x_2)$.  The achievability part is proved
using superposition coding.

We formulate the second-order asymptotics according to the following definition \cite{Nom14}.

\begin{definition}[Second-Order Coding Rates] \label{def;second}
    Fix $\eps\in (0,1)$, and let $(R_1^*, R_2^*)$ be a pair of rates on the boundary of $\calC$. A pair $(L_1, L_2)$ is {\em $(\eps,R_1^*,R_2^*)$-achievable} 
    if there exists a sequence of codes with length $n$, number of codewords
    for message $j=1,2$ equal to $M_{j,n}$, and average error probability $\eps_n$, such that  
    \begin{gather}
    \liminf_{n\to\infty}\frac{1}{\sqrt{n}}(\log M_{j,n} - nR_j^*) \ge L_j,\quad j = 1,2, \label{eqn:second_def0}\\
    \limsup_{n\to\infty}\eps_n \le \eps. \label{eqn:second_def}
    \end{gather}
    The {\em $(\eps,R_1^*,R_2^*)$-optimal second-order coding rate region} 
    $\calL(\eps;R_1^*,R_2^*) \subset\bbR^2$ is defined to be the closure of the 
    set of all $(\eps,R_1^*,R_2^*)$-achievable rate pairs  $(L_1, L_2)$. 
\end{definition}

Throughout the paper, we write non-asymptotic rates as 
$R_{1,n} := \frac{1}{n}\log{M_{1,n}}$ and $R_{2,n} := \frac{1}{n}\log{M_{2,n}}$.
Roughly speaking, the preceding definition is concerned with $\eps$-reliable codes 
such that $R_{j,n} \ge R_{j}^* + \frac{1}{\sqrt n}L_{j} + o\big(\frac{1}{\sqrt n}\big)$ for $j=1,2$.

We will also use the following standard definition: A rate pair $(R_1,R_2)$ is 
\emph{$(n,\eps)$-achievable} if there exists a length-$n$ code having an average 
error probability no higher than $\eps$, and whose rate is at least $R_{j}$
for message $j=1,2$.

\subsection{Notation}

Except where stated otherwise,\footnote{For example, the vectors in \eqref{eqn:rate_vec}--\eqref{eqn:L_vec},  do not adhere to this convention.} 
the $i$-th entry of a vector 
(e.g. $\by$) is denoted using a subscript (e.g. $y_i$).
For two vectors of the same length $\ba,\mathbf{b}\in\bbR^d$, 
the notation $\ba\le\bb$ means that $a_j\le b_j$ for 
all $j$. The notation $\calN(\bu;\bm{\mu}, \bLambda)$ 
denotes the  multivariate Gaussian probability density function 
(pdf) with mean $\bm{\mu}$ and covariance $\bLambda$. 
We use the standard asymptotic notations $O(\cdot)$, $o(\cdot)$,
$\Theta(\cdot)$, and $\omega(\cdot)$.
All logarithms have base $e$, and all rates have units of nats.
The closure operation is denoted by $\cl(\cdot)$.

The set of all probability distributions on an alphabet $\calX$ 
is denoted by $\calP(\calX)$, and the set of all types \cite[Ch. 2]{Csi97}
is denoted by $\calP_n(\calX)$.  For a given type $Q_X \in \calP_n(\calX)$,
we define the type class $T^n(Q_X)$ to be the set of sequences having type $Q_X$.
Similarly, given a conditional type $Q_{Y|X}$ and a sequence $\bx \in T^n(Q_X)$,  
we define $T_{\bx}^n(Q_{Y|X})$ to be the set of sequences $\by$ such that
$(\bx,\by) \in T^n(Q_X \times Q_{Y|X})$.

\section{Main Result}

\subsection{Preliminary Definitions}

Given the rate pairs $(R_{1,n}, R_{2,n})$ and $(R_1^*,R_2^*)$, we define
\begin{align}
    \bR_n := \begin{bmatrix}
    R_{1,n} \\ R_{1,n} + R_{2,n}
    \end{bmatrix}, \quad 
    \bR^* := \begin{bmatrix}
    R_1^* \\ R_1^* + R_2^*
    \end{bmatrix} \label{eqn:rate_vec}
\end{align}
Similarly, given the second-order rate pair $(L_1,L_2)$, we write
\begin{align}
    \bL := \begin{bmatrix}
    L_1 \\ L_1 + L_2 \label{eqn:L_vec}
    \end{bmatrix}
\end{align}
Given a joint input distribution $P_{X_1X_2}\in\calP(\calX_1\times\calX_2)$, we define 
$P_{X_1X_2Y} := P_{X_1X_2} \times W$, and denote the induced
marginals by $P_{Y|X_1}$, $P_Y$, etc.  We define the following {\em information density vector}, 
which implicitly depends on $P_{X_1X_2}$:
\begin{align} 
    \bj(x_1, x_2  , y) &:= \begin{bmatrix}
    j_1(x_1, x_2  , y)   &  j_{12}(x_1, x_2,y)
    \end{bmatrix}^T  \\
    &= \begin{bmatrix}
    \log\dfrac{W(y|x_1,x_2)}{P_{Y|X_2}(y|x_2)}  & \log\dfrac{W(y|x_1,x_2)}{P_{Y}(y)}  
    \end{bmatrix}^T.
    \label{eqn:info_dens}
\end{align} 
The mean and conditional covariance matrix are given by
\begin{align}
    \bI(P_{X_1X_2}) &=\bbE\big[\, \bj(X_1,X_2,Y)\big],  \label{eqn:mean_v}\\
    \bV(P_{X_1X_2}) &=\bbE\big[\cov \big(\bj(X_1,X_2,Y) \, \big|\, X_1,X_2 \big)\big]. \label{eqn:cov_v}
\end{align}
Observe that the entries of $\bI(P_{X_1X_2})$ are the mutual informations
appearing in \eqref{eqn:cap_region1}--\eqref{eqn:cap_region2}.  We write 
the entries of $\bI$ and $\bV$ using subscripts as follows:
\begin{align}
    \bI(P_{X_1X_2}) &= \begin{bmatrix}
    I_1(P_{X_1X_2}) \\ I_{12}(P_{X_1X_2})
    \end{bmatrix} \label{eqn:mi_vec}, \\
    \bV(P_{X_1X_2}) &= \begin{bmatrix}
    V_1(P_{X_1X_2})  & V_{1,12}(P_{X_1X_2}) \\ V_{1,12}(P_{X_1X_2})  & V_{12}(P_{X_1X_2}) \end{bmatrix}, \label{eqn:inf_disp_matr}
\end{align}
For a given point $(z_1, z_2) \in \bbR^2$ and a positive semi-definite matrix $\bV$, 
we define the multivariate Gaussian cumulative distribution function (CDF)
\begin{align}
    \Psi(z_1, z_2;\bV) :=\int_{-\infty}^{z_2}\int_{-\infty}^{z_1}\calN(\bu;\bzero,\bV)\,\rmd \bu,
\end{align}
and for a given $\eps\in (0,1)$, we define the corresponding ``inverse" set
\begin{align}
    \Psi^{-1}(\bV,\eps):=\left\{ (z_1, z_2)\in\bbR^2:\Psi(-z_1,- z_2;\bV) \ge  1-\eps\right\}. \label{eqn:psiinv}
\end{align}
Similarly, we let $\Phi(\cdot)$ denote the standard Gaussian CDF, and we denote
its functional inverse by $\Phi^{-1}(\cdot)$.  Moreover, we let 
\begin{align}
    \Pi(R_1^*,R_2^*) := \big\{ P_{X_1X_2} \,:\,   \bI(P_{X_1X_2}) \ge \bR^* \big\} \label{eqn:def_Pi}
\end{align}
be the set of input distributions achieving a given point $(R_1^*,R_2^*)$ of 
the boundary of $\calC$. Note that in contrast with the single-user setting 
\cite{Strassen, PPV10, Hayashi09}, this definition uses an inequality rather
than an equality, as one of the mutual information quantities may be 
strictly larger than the corresponding entry of $\bR^*$ and yet be 
first-order optimal. For example, assuming that the capacity region on the
left of Figure \ref{fig:tangents} is achieved by a single input distribution,
all points $(R_1^*,R_2^*)$ on the vertical boundary satisfy $I_{12}(P_{X_1 X_2}) > R_1^*+R_2^*$.

The preceding definitions are analogous to those appearing in previous works such
as \cite{TK12}, while the remaining definitions are somewhat less standard.  Given
the boundary point $(R_1^*,R_2^*)$, we let
$\Tlhat := \Tlhat(R_1^*,R_2^*)$ and $\Trhat := \Trhat(R_1^*,R_2^*)$ denote the
left and right unit tangent vectors along the boundary of $\calC$ in $(R_1,R_2)$
space; see Figure \ref{fig:tangents} for an illustration. Furthermore, we define
\begin{align}
    \Tleft := \begin{bmatrix} \Tlhati \\ \Tlhati + \Tlhatii \end{bmatrix}, \quad
    \Tright := \begin{bmatrix} \Trhati \\ \Trhati + \Trhatii \end{bmatrix}.
\end{align}
It is understood that $\Tlhat$ and $\Tleft$ (respectively, $\Trhat$ and $\Tright$)
are undefined when $R_1^* = 0$ (respectively, $R_2^* = 0$).  As is observed
in Figure \ref{fig:tangents}, we have $\Tlhat=-\Trhat$ on the curved
and straight-line parts of $\calC$, and $\Tlhat\ne-\Trhat$ when
there is a sudden change in slope (e.g. at a corner point).

\begin{figure}[t]
\centering
\includegraphics[width=0.95\columnwidth]{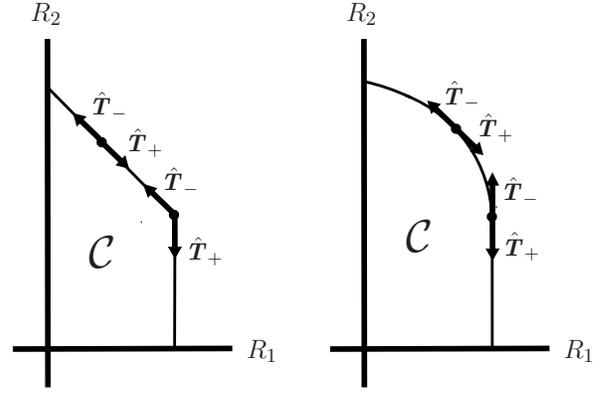}

\vspace{-3mm}
\caption{Unit tangent vectors $\Tlhat$ and $\Trhat$ for two boundary points $(R_1^*,R_2^*)$ of two hypothetical capacity regions.} 
\vspace{-3mm}

\label{fig:tangents}
\end{figure}

The following set of vectors can be thought of as those that point strictly inside  
$\calC$ when placed at $(R_1^*,R_2^*)$:
\begin{multline}
    \hat{\calV}(R_1^*,R_2^*) := \\ \big\{ \bv\in\bbR^2 \,:\, (R_1^*,R_2^*) + \alpha \bv \in \calC \text{ for some } \alpha>0 \big\}.
\end{multline}
Using this definition, we set
\begin{align}
    \calV(R_1^*,R_2^*) := \cl\bigg( \bigcup_{(v_1,v_2) \in \hat{\calV}(R_1^*,R_2^*)}\big\{ (v_1,v_1+v_2) \big\}\bigg). \label{eqn:setV}
\end{align}
Due to the closure operation, it is readily verified that $\Tleft \in \calV$ and
$\Tright \in \calV$.

\subsection{Statement of Main Result}

For a given boundary point $(R_1^*,R_2^*)$ and input distribution $P_{X_1X_2} \in \Pi(R_1^*,R_2^*)$, 
we define the set $\calL_0(\eps;R_1^*,R_2^*,P_{X_1X_2})$ separately for the following three cases:

\emph{(i)} If $R_1^* = I_1(P_{X_1X_2})$ and  $R_1^* + R_2^* < I_{12}(P_{X_1X_2})$, then
    \begin{align}
        \calL_0 = \Big\{ (L_1,L_2) \,:\, L_1 \le \sqrt{V_1(P_{X_1X_2})}\Phi^{-1}(\eps) \Big\} \label{eqn:L0_1}
    \end{align}
        
\emph{(ii)} If $R_1^* < I_1(P_{X_1X_2})$ and  $R_1^* + R_2^* = I_{12}(P_{X_1X_2})$, then
    \begin{align}
        \hspace*{-0.1ex}\calL_0 = \Big\{ (L_1,L_2) \,:\, L_1+L_2 \le \sqrt{V_{12}(P_{X_1X_2})}\Phi^{-1}(\eps) \Big\} \label{eqn:L0_12}
    \end{align}
        
\emph{(iii)} If $R_1^* = I_1(P_{X_1X_2})$ and  $R_1^* + R_2^* = I_{12}(P_{X_1X_2})$, then
    \begin{align}
        \calL_0 = \Big\{ (L_1,L_2) \,:\, \bL \in \bigcup_{\beta \ge 0} \big\{ \beta\Tleft + \Psi^{-1}(\bV(P_{X_1X_2}),\eps) \big\} \Big\} & \nn \\
            \cup~\Big\{ (L_1,L_2) \,:\, \bL \in \bigcup_{\beta \ge 0} \big\{ \beta\Tright + \Psi^{-1}(\bV(P_{X_1X_2}),\eps) \big\} \Big\}  &, \label{eqn:L0_both}
    \end{align}
where the first (respectively, second) set in the union is understood to be empty
when $R_1^*=0$ (respectively, $R_2^*=0$).
    
We are now in a position to state our main result.

\begin{theorem} \label{thm:local}
    For any point $(R_1^*,R_2^*)$ on the boundary of the capacity region, and any $\eps\in(0,1)$, we have
    \begin{equation}
        \calL(\eps;R_1^*,R_2^*) \!=\! \bigcup_{P_{X_1X_2} \in \Pi(R_1^*,R_2^*)} \calL_0(\eps;R_1^*,R_2^*,P_{X_1X_2}). \label{eqn:local}
    \end{equation}
\end{theorem}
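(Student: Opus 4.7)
I would establish both inclusions ($\supseteq$ and $\subseteq$) in \eqref{eqn:local}. Cases (i) and (ii) each reduce essentially to single-user dispersion: in (i), conditioning on the $\bX_2$-codeword makes decoding $m_1$ a single-user problem with dispersion $V_1$, while in (ii), treating $(X_1,X_2)$ as one effective input gives single-user dispersion $V_{12}$. The substantive case is (iii), where both mutual information constraints are simultaneously tight and the 2D Gaussian region $\Psi^{-1}(\bV,\eps)$, together with its tangent-shifted copies, plays a genuine role.

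\textbf{Achievability.} Fix $P_{X_1X_2}\in\Pi(R_1^*,R_2^*)$ and use random superposition coding: generate $\bX_2(m_2)$ with entries i.i.d.\ $\sim P_{X_2}$, and conditional on $\bX_2(m_2)$, $\bX_1(m_1,m_2)$ with entries $\sim P_{X_1|X_2}(\cdot|X_{2,i}(m_2))$. Decode with a two-threshold rule: output the unique $(\hat m_1,\hat m_2)$ for which both $\sum_i j_1\ge\gamma_1$ and $\sum_i j_{12}\ge\gamma_{12}$. By standard change-of-measure bounds on incorrect pairs sharing $m_2$ (each contributing $e^{-\gamma_1}$) and with different $m_2$ (each contributing $e^{-\gamma_{12}}$),
\begin{equation*}
    \Pen\le\Pr\!\bigg[\sum_{i=1}^n\bj(X_{1,i},X_{2,i},Y_i)\not\ge\begin{bmatrix}\gamma_1\\ \gamma_{12}\end{bmatrix}\bigg]+M_{1,n}e^{-\gamma_1}+M_{1,n}M_{2,n}e^{-\gamma_{12}}.
\end{equation*}
Choosing $\gamma_1=\log M_{1,n}+\tfrac12\log n$ and $\gamma_{12}=\log(M_{1,n}M_{2,n})+\tfrac12\log n$ drives the last two terms to zero, and the multivariate Berry--Esseen theorem reduces the first probability to a bivariate Gaussian CDF, recovering the plain $\Psi^{-1}(\bV(P_{X_1X_2}),\eps)$ region. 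The $\beta\Tleft+\Psi^{-1}$ enlargement is obtained by rerunning the analysis at the shifted target $\bR^*+(\bL-\beta\Tleft)/\sqrt n$: since $\Tleft$ is tangent to $\partial\calC$, this neighbor point is first-order achievable and the same 2D Berry--Esseen applies, so varying $\beta\ge0$ sweeps out the union. A symmetric argument handles the $\Tright$ branch.

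\textbf{Converse.} For the reverse inclusion, I would decompose the codebook by joint types $Q_{X_1X_2}$ of the codeword pair and apply two single-user meta-converses per type: (a) conditioned on $m_2$, $m_1$ must be decodable against $W^n(\cdot|\cdot,\bx_2(m_2))$, yielding $R_{1,n}\le I_1(Q)+\sqrt{V_1(Q)/n}\,\Phi^{-1}(\eps)+o(1/\sqrt n)$; (b) $(m_1,m_2)$ must be jointly decodable over the superposed channel, yielding $R_{1,n}+R_{2,n}\le I_{12}(Q)+\sqrt{V_{12}(Q)/n}\,\Phi^{-1}(\eps)+o(1/\sqrt n)$. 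In case (iii) these two bounds must be combined into a bivariate-Gaussian constraint governed by the correlated CDF $\Psi(\cdot,\cdot;\bV(Q))$ via a 2D Berry--Esseen converse, so the off-diagonal $V_{1,12}(Q)$ enters essentially. Continuity of $\bI(\cdot)$ and $\bV(\cdot)$ restricts the contributing types $Q$ to an $O(1/\sqrt n)$-neighborhood of $\Pi(R_1^*,R_2^*)$, since types outside this neighborhood suffer an $\Omega(1)$ first-order deficit. Taking the union over these limiting types recovers the right-hand side of \eqref{eqn:local}.

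\textbf{Main obstacle.} The principal difficulty is the tight converse in case (iii): showing that any achievable $\bL$ lies in \emph{one} of the two tangent-shifted 2D Gaussian regions, rather than in some larger combined shape. The geometric content is that near a corner of $\calC$, the region is locally contained in the union of two half-planes whose boundary lines are normal to $\Tleft$ and $\Tright$; any $\bL$ outside both shifted $\Psi^{-1}$'s violates some linear combination of bounds (a) and (b) that can be expressed as a single bivariate-Gaussian tail constraint along the appropriate normal. Making this rigorous requires picking the linear combination as a function of $\bL$ and handling the case where $\Pi(R_1^*,R_2^*)$ contains multiple distinct optimizers (giving different $\bV$ matrices), which the outer union over $P_{X_1X_2}$ in \eqref{eqn:local} absorbs.
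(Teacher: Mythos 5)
Your proposal captures the correct high-level shape of the result but has genuine gaps in both directions, and the converse route you sketch does not deliver the stated region.

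\textbf{Achievability.} Two issues. First, you use i.i.d.\ random coding, but the dispersion matrix $\bV$ in the statement is the \emph{conditional} information covariance $\bbE[\cov(\bj\,|\,X_1,X_2)]$. Plain i.i.d.\ coding picks up the extra variance from random codeword compositions and gives the unconditional covariance instead, which is generally larger; that is why the paper uses constant-composition superposition codes. Second, your mechanism for the $\beta\Tleft$ enlargement in case (iii) --- ``rerunning the analysis at the shifted target'' --- does not explain what code achieves that target. If you stay with the same $P_{X_1X_2}$, the mean of the information density vector is still $\bR^*$, not $\bR^* + (\beta/\sqrt n)\Tleft$; if you switch to a nearby input distribution $P'$, you change $\bV$ as well. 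The paper resolves this with coded time-sharing: a deterministic $\bu$ with a $\Theta(1/\sqrt n)$ fraction of symbols at a second distribution $P'_{X_1X_2}$ achieving a nearby boundary point, which shifts $\bI$ by $\beta(\bI'-\bI)/\sqrt n$ while leaving $\bV$ unchanged to the relevant order. Your sketch is missing this device.

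\textbf{Converse.} This is the more serious gap. Applying two separate single-user meta-converses --- one for $m_1$ conditioned on $\bx_2$, one for $(m_1,m_2)$ jointly --- yields, after Berry--Esseen, the \emph{rectangular} region $\{L_1\le\sqrt{V_1}\Phi^{-1}(\eps),\ L_1+L_2\le\sqrt{V_{12}}\Phi^{-1}(\eps)\}$. This is strictly larger than $\Psi^{-1}(\bV,\eps)$ whenever $V_{1,12}\ne 0$, so the converse would not match the achievability. You then say these ``must be combined via a 2D Berry--Esseen converse,'' but no linear (or any marginal) combination of the two scalar bounds recovers the bivariate region: $\Psi^{-1}(\bV,\eps)$ arises from a constraint on a \emph{joint} probability $\Pr[A_1\ge z_1, A_{12}\ge z_{12}]\ge 1-\eps$, which cannot be extracted from two marginal tail bounds. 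What you actually need, and what the paper uses, is a single Verd\'u--Han-type bound with the \emph{vector} information density (Eq.~\eqref{eqn:vh} in the paper), applied once per fixed joint type, followed by the multivariate Berry--Esseen theorem with the possibly-singular $\bV_n$ handled by an eigen-decomposition and Chebyshev truncation of the low-variance direction. Your ``main obstacle'' paragraph is gesturing at this, but the claim that some linear combination of bounds (a) and (b) gives the bivariate-Gaussian constraint is false as stated.

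\textbf{Omitted reductions that are load-bearing.} The type-class decomposition in your converse tacitly assumes you may pass from average error to per-type analysis. For this MAC with degraded message sets, the paper first reduces from average to \emph{maximal} error probability and then to constant-composition codes (both reductions exploit degradedness and are not available for the general MAC), and then passes to a convergent subsequence of joint types. Without these steps, there is no justification for treating each joint type uniformly and taking limits of $\bI_n,\bV_n$. You also do not address singular or asymptotically singular $\bV_n$, which arise along the type sequence and break a naive multivariate Berry--Esseen application. These are not cosmetic --- they are where the degraded-message-set structure is actually used --- and should appear explicitly.
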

\begin{proof}
    See Section \ref{sec:prf_local}.
\end{proof}

Suppose that $\calX_1=\emptyset$ and $(R_1^*, R_2^*)=(0,C)$, where $C := \max_{P_{X_2}}I(P_{X_2} , W)$ and $W:\calX_2\to\calY$.  Clearly $L_1$ plays no role, and Theorem \ref{thm:local} states that the achievable values of 
$L_2$ are precisely those in  the set 
\begin{equation}
 \calL_2(\eps  ):=    \bigcup_{P_{X_2} \in\Pi} \Big\{L_2: L_2 \le  \sqrt{V(P_{X_2}) }\Phi^{-1}(\eps) \Big\},
\end{equation}
where $\Pi := \{P_{X_2}  :  I(P_{X_2}, W) = C \}$, and $V(\cdot) := V_{12}(\cdot)$ is the conditional information variance \cite{PPV10}.
Letting $L^*  := \sup\calL_2(\eps  )$ be the {\em second-order coding 
rate} \cite{Hayashi09} of the discrete memoryless channel (DMC) $W:\calX_2\to\calY$, we readily obtain
\begin{equation}
    L^*  = \left\{  \begin{array}{cc}
    \sqrt{\min_{P_{X_2} \in\Pi} V(P_{X_2}) } \, \Phi^{-1}(\eps) &\eps < \frac12 \\
    \sqrt{ \max_{P_{X_2} \in\Pi} V(P_{X_2})  }\,\Phi^{-1}(\eps) &\eps \ge \frac12.
    \end{array}
    \right.
\end{equation}
Thus, our main result reduces to the classical result of Strassen~\cite[Thm.~3.1]{Strassen}  
for the single-user setting
(see also \cite{Hayashi09,PPV10}). This illustrates the necessity of the 
set $\Pi(R_1^*, R_2^*)$ in the characterization of $\calL(\eps;R_1^*,R_2^*)$ 
in Theorem~\ref{thm:local}. Such a set is not needed in the Gaussian setting~\cite{ScaTan13}, 
as every boundary point is achieved uniquely by a single multivariate Gaussian distribution. 
Another notable difference in Theorem \ref{thm:local} compared to \cite{ScaTan13} is the 
use of left and right tangent vectors instead of a single derivative vector.

Both of the preceding differences were also recently observed in an achievability 
result for the standard MAC \cite{Scarlett13b_arxiv}.  However, no converse results
were given in \cite{Scarlett13b_arxiv}, and the main novelty of the present paper is in the converse proof.

It is not difficult to show that $\calL$ equals a half-space whenever 
$\Tlhat=-\Trhat$, as was observed in \cite{ScaTan13,Scarlett13b_arxiv}.
A less obvious fact is that the unions over $\beta$ in \eqref{eqn:L0_both}
can be replaced by $\beta=0$ whenever the corresponding input distribution
$P_{X_1X_2}$ achieves all of the boundary points in a neighborhood
of $(R_1^*,R_2^*)$. We refer the reader to \cite{ScaTan13,Scarlett13b_arxiv} 
for further discussions and illustrative numerical examples.

\section{Proof of Theorem \ref{thm:local}} \label{sec:prf_local}

Due to space constraints, we do not attempt to make the proof self-contained.
We avoid repeating the parts in common with \cite{Scarlett13b_arxiv,ScaTan13}, and
we focus on the most novel aspects.

\subsection{Achievability}

The achievability part of Theorem \ref{thm:local} is proved using a similar
(yet simpler) argument to that of the standard MAC given in \cite{Scarlett13b_arxiv}, 
so we only provide a brief outline.

We use constant-composition superposition coding with coded 
time sharing \cite[Sec. 4.5.3]{elgamal}.  We set $\calU := \{1,2\}$, fix a joint
distribution $Q_{UX_1X_2}$ (to be specified shortly), and let $Q_{UX_1X_2,n}$ be 
the closest corresponding joint type. We write the marginal distributions
in the usual way (e.g. $Q_{X_1|U,n}$).
We let $\bu$ be a deterministic time-sharing sequence with $nQ_{U,n}(1 )$ ones and
$nQ_{U,n}(2 )$ twos.  We first generate the $M_{2,n}$ codewords of user 2 independently according to the
uniform distribution on  $T_{\bu}^n(Q_{X_1|U,n})$.  For
each $m_2$, we generate $M_{1,n}$ codewords for user 1 conditionally independently according  
to the uniform distribution on $T_{\bu\bx_2}^n(Q_{X_1|X_2U,n})$, where $\bx_2$ is  
the codeword for user 2 corresponding to $m_2$.  

We fix $\beta\ge0$ and choose $Q_{UX_1X_2}$ such that $Q_U(1) = 1 - \frac{\beta}{\sqrt n}$
and $Q_U(2) = \frac{\beta}{\sqrt n}$, let $Q_{X_1X_2|U=1}$ be an input distribution $P_{X_1X_2}$ achieving
the boundary point of interest, and let $Q_{X_1X_2|U=2}$ be an input distribution $P'_{X_1X_2}$ achieving 
a different boundary point.
We define the shorthands $\bI := \bI(P_{X_1X_2})$, $\bV := \bV(P_{X_1X_2})$ and $\bI' := \bI(P'_{X_1X_2})$.
Using the generalized Feinstein bound given in \cite{ScaTan13} along with the multivariate 
Berry-Esseen theorem, we can use the arguments of \cite{Scarlett13b_arxiv} to conclude that
all rate pairs $(R_{1,n},R_{2,n})$ satisfying
\begin{equation}
    \bR_n \in \bI + \frac{1}{\sqrt n}\Big(\beta(\bI'-\bI) + \Psi^{-1}(\bV,\eps) \Big) + g(n)\bone \label{eq:global_ach}
\end{equation}
are $(n,\eps)$-achievable for some $g(n)=O\big(n^{1/4}\big)$ depending on $\eps$, 
$\beta$, $P_{X_1X_2}$ and $P'_{X_1X_2}$.  Note that this argument may require 
a reduction to a lower dimension for singular dispersion matrices; an analogous
reduction will be given in the converse proof below.

The achievability part of Theorem \ref{thm:local} now follows as in \cite{Scarlett13b_arxiv}.
In the cases corresponding to \eqref{eqn:L0_1}--\eqref{eqn:L0_12}, we eliminate one of
the two element-wise inequalities from \eqref{eq:global_ach} to obtain the desired result.
For the remaining case corresponding to \eqref{eqn:L0_both}, we obtain the first
(respectively, second) term in the union by letting $P_{X_1X_2}'$ achieve a boundary  
point approaching $(R_1^*,R_2^*)$ from the left (respectively, right).

\subsection{Converse}

The converse proof builds on that for the Gaussian case \cite{ScaTan13}, but contains
more new ideas compared to the achievability part.  We thus provide a more detailed
treatment.  

\subsubsection{A Reduction from Average Error to Maximal Error}

Using an identical argument to the Gaussian case \cite{ScaTan13} (which itself
builds on~\cite[Cor.~16.2]{Csi97}), we can show
that $\calL(\eps;R_1^*,R_2^*)$ is identical when the average error probability
is replaced by the maximal error probability in Def.~\ref{def;second}.
We may thus proceed by considering the maximal error probability.  Note that
neither this step nor the following step are possible for the standard MAC;
the assumption of degraded message sets is crucial.

\subsubsection{A Reduction to Constant-Composition Codes}

Using the previous step and the fact that the number of joint types 
on $\calX_1 \times \calX_2$ is polynomial in $n$, we can again follow an
identical argument to the Gaussian case \cite{ScaTan13} to show that 
$\calL(\eps;R_1^*,R_2^*)$ is unchanged when the codebook
is restricted to contain codeword pairs $(\bx_1,\bx_2)$ sharing a common joint
type.  We thus limit our attention to such codebooks; we denote the corresponding
sequence of joint types by $\{P_{X_1X_2,n}\}_{n\ge1}$.

\subsubsection{Passage to a Convergent Subsequence} \label{step:passage}

Since  $\calP(\calX_1 \times \calX_2)$ is compact, the sequence
$\{P_{X_1X_2,n}\}_{n\ge1}$ must have a convergent subsequence, say indexed by a
sequence $\{n_k\}_{k\ge1}$ of block lengths.
We henceforth limit our attention to values of $n$ on this subsequence.  To avoid cumbersome
notation, we continue writing $n$ instead of $n_k$. 
However, it should be understood that asymptotic
notations such as $O(\cdot)$ and $(\cdot)_n \to (\cdot)$ 
are taken with respect to this subsequence.
The idea is that it suffices to prove the converse result 
only for values of $n$ on an arbitrary subsequence of $(1,2,3,\dotsc)$, since 
we used the $\liminf$ in \eqref{eqn:second_def0} and the 
$\limsup$ in \eqref{eqn:second_def}.  

\subsubsection{A Verd\'u-Han-Type Converse Bound}

We make use of the following non-asymptotic converse bound from \cite{ScaTan13}:
\begin{align}
    \eps_n \ge 1 - \Pr\bigg(\frac{1}{n}\sum_{i=1}^{n}\bj(X_{1,i},X_{2,i},Y_i) \ge \bR_n - \gamma\bone \bigg) - 2e^{-n\gamma}, \label{eqn:vh}
\end{align}
where $\gamma$ is an arbitrary constant, $(\bX_1,\bX_2)$ is the random pair induced by
the codebook, and $\bY$ is the resulting output.  The output distributions defining $\bj$ 
are those induced by the fixed input joint type $P_{X_1 X_2, n}$.  By the above 
constant-composition reduction and a simple symmetry argument, we may replace 
$(\bX_1,\bX_2)$ by a fixed pair $(\bx_1,\bx_2) \in T^n(P_{X_1X_2,n})$.

\subsubsection{Handling Singular Dispersion Matrices}

Directly applying the multivariate Berry-Esseen theorem (e.g. see \cite[Sec. VI]{TK12}) 
to \eqref{eqn:vh} is problematic, since the dispersion matrix $\bV(P_{X_1X_2,n})$ may
be singular or asymptotically singular.  We therefore proceed by handling such matrices,
and reducing the problem to a lower dimension as necessary.

We henceforth use the shorthands $\bI_n := \bI(P_{X_1X_2,n})$ and $\bV_n := \bV(P_{X_1X_2,n})$.
An eigenvalue decomposition yields
\begin{align}
    \bV_n = \bU_n\bD_n\bU_n^T,
\end{align} 
where $\bU_n$ is unitary  (i.e. $\bU_n\bU_n^T$ is the identity matrix) and $\bD_n$ is diagonal.  
Since we passed to a convergent subsequence
in Step 3 and the eigenvalue decomposition map $\bV_n \to (\bU_n,\bD_n)$ is continuous, 
we conclude that both $\bU_n$ and $\bD_n$ converge, say to $\bU_{\infty}$ and $\bD_{\infty}$.
When $\rank(\bD_{\infty})=2$ (i.e. $\bD_{\infty}$ has full rank), we directly use the multivariate
Berry-Esseen theorem.  We proceed by discussing lower rank matrices.

Since $\bV_n$ is the covariance matrix of $\bA_n := \frac{1}{\sqrt n}\big( \sum_{i=1}^{n}\bj(x_{1,i},x_{2,i},Y_i) - n\bI_n \big)$
(with $Y_i\sim W(\cdot|x_{1,i},x_{2,i})$),
we see that $\bD_n$ is the covariance matrix of $\tilde{\bA}_n := \bU_n^T\bA_n$.  In the case that
$\rank(\bD_{\infty})=1$, we may write
\begin{align}
    \tilde{\bA}_n := \big[ \tilde{A}_{n,1} ~ \tilde{A}_{n,2}]^T,
\end{align}
where $\var[\tilde{A}_{n,1}]$ is bounded away from zero, and  $\var[\tilde{A}_{n,2}] \to 0$. 
Since $\bU_n$ is unitary, we have
\begin{align}
    \bA_n = \bU_n\tilde{\bA}_n  
  = \bU_{n,1}\tilde{A}_{n,1} + \bDelta_n, \label{eqn:An2}
\end{align}
where $\bU_{n,i}$ denotes the $i$-th column of $\bU_n$, and $\bDelta_n := \bU_{n,2}\tilde{A}_{n,2}$.
Since $\bA_n$ has mean zero by construction, the same is true of $\tilde{\bA}_n$ and hence $\bDelta_n$.
Moreover, since $\tilde{A}_{n,1}$ has vanishing variance, the same is true of each entry of $\bDelta_n$.
Thus, Chebyshev's inequality implies that, for any $\delta_n>0$,
\begin{align}
    \Pr\big(\|\bDelta_n\|_{\infty} \ge \delta_n \big) \le \frac{\psi_n}{\delta_n^2}, \label{eqn:Cheby}
\end{align} 
where $\psi_n := \max_{i=1,2}\var[\Delta_{n,i}] \to 0$.

We can now bound the probability in \eqref{eqn:vh} as follows:
\begin{align}
    & \Pr\bigg(\frac{1}{n}\sum_{i=1}^{n}\bj(X_{1,i},X_{2,i},Y_i) \ge \bR_n - \gamma\bone \bigg) \nn \\
        & \,\, = \Pr\Big( \bA_n \ge \sqrt{n}\big(\bR_n - \bI_n - \gamma\bone \big) \Big) \\
        & \,\, = \Pr\Big( \bU_{n,1}\tilde{A}_{n,1} + \bDelta_n \ge \sqrt{n}\big(\bR_n - \bI_n - \gamma\bone \big) \Big) \\
        & \,\, \le \Pr\Big( \bU_{n,1}\tilde{A}_{n,1} \ge \sqrt{n}\big(\bR_n - \bI_n - \gamma\bone\big) - \delta_n\bone \Big) \nn \\
        & \,\,\qquad\qquad\qquad\qquad\qquad   + \Pr\big(\|\bDelta_n\|_{\infty} \ge \delta_n \big) \\
        & \,\, \le \Pr\Big( \bU_{n,1}\tilde{A}_{n,1} \!\ge\! \sqrt{n}\big(\bR_n\! -\! \bI_n\! -\! \gamma\bone\big) - \delta_n\bone \Big) \!+\! \frac{\psi_n}{\delta_n^2}, \label{eqn:rank_one}
\end{align}
where the last three steps respectively follow from \eqref{eqn:An2}, \cite[Lemma 9]{TK12}, 
and \eqref{eqn:Cheby}.  We now choose $\delta_n=\psi_n^{1/3}$, so that both $\delta_n$ and 
$\frac{\psi_n}{\delta_n^2}$ are vanishing.  Equation \eqref{eqn:rank_one} permits an application
of the {\em univariate} Berry-Esseen theorem, since the variance of $\tilde{A}_{n,1}$ is bounded 
away from zero.

The case $\rank(\bD_{\infty})=0$ is handled similarly using Chebyshev's inequality, and we thus 
omit the details and merely state that \eqref{eqn:rank_one} is replaced by
\begin{align}
    \openone\Big( \sqrt{n}\big(\bR_n - \bI_n - \gamma\bone\big) \le \delta_n\bone \Big) + \delta_n' \label{eqn:rank_zero}
\end{align}  
where $\delta_n \to 0$ and $\delta_n' \to 0$.

\subsubsection{Application of the Berry-Esseen Theorem}

Let $\bI_{\infty}$ and $\bV_{\infty}$ denote the limiting values (on the convergent
subsequence of block lengths) of $\bI_n$ and $\bV_n$. In this step, we will use the fact 
that $\Psi^{-1}(\cdot,\eps)$ is continuous in the following sense:
\begin{align}
    \Psi^{-1}(\bV_n,\eps) - \delta\bone \subset \Psi^{-1}(\bV_{\infty},\eps) \subset \Psi^{-1}(\bV_n,\eps) + \delta\bone \label{eqn:continuity} 
\end{align}
for any $\delta>0$ and sufficiently large $n$.  This is proved using a 
Taylor expansion when $\bV_{\infty}$ has full rank, and is proved similarly to 
\cite[Lemma 6]{ScaTan13} when $\bV_{\infty}$ is singular.

We claim that the preceding two steps, along with the choice $\gamma := \frac{\log n}{n}$,
imply that the rate pair $(R_{1,n},R_{2,n})$ satisfies
\begin{equation}
    \bR_n \in \bI_n + \frac{1}{\sqrt n}\Psi^{-1}(\bV_{\infty},\eps) + g(n)\bone \label{eq:global_conv}
\end{equation}
for some $g(n)=o\big(\frac{1}{\sqrt n}\big)$ depending on $P_{X_1,X_2,n}$ and $\eps$.  
In the case $\rank(\bD_{\infty})=2$ (see the preceding step), this follows by applying the multivariate
Berry-Esseen theorem with a positive definite covariance matrix, re-arranging to obtain
\eqref{eq:global_conv} with $\bV_n$ in place of $\bV_{\infty}$, and then using \eqref{eqn:continuity}.

In the case $\rank(\bV_{\infty})=1$, we obtain \eqref{eq:global_conv} by applying the univariate 
Berry-Esseen theorem to \eqref{eqn:rank_one} and similarly applying rearrangements and 
\eqref{eqn:continuity}.  The resulting expression can be written in the multivariate form
in \eqref{eq:global_conv} by a similar argument to \cite[p. 894]{TK12}.

When $\rank(\bV_{\infty})=0$, we have $\bV_{\infty}=\bzero$, and $\Psi^{-1}(\bV_{\infty},\eps)$ is simply the 
quadrant $\{\bz \,:\, \bz \le \bzero \}$.  We thus obtain \eqref{eq:global_conv} by
noting that the indicator function in \eqref{eqn:rank_zero} is zero for sufficiently
large $n$ whenever either entry of $\bR_n$ exceeds the corresponding entry of $\bI_n$
by $\Theta\big(\frac{1}{\sqrt n}\big)$. 

\subsubsection{Establishing the Convergence to $\Pi(R_1^*,R_2^*)$}

We use a proof by contradiction to show that the limiting value $P_{X_1X_2,\infty}$
of $P_{X_1X_2,n}$ (on the convergent subsequence of block lengths) must lie
within $\Pi(R_1^*,R_2^*)$.  Assuming the contrary, we observe from
\eqref{eqn:def_Pi} that at least one of the strict inequalities
$I_1(P_{X_1X_2,\infty}) < R_1^*$ and $I_{12}(P_{X_1X_2,\infty}) < R_1^* + R_2^*$ must hold.
It thus follows from \eqref{eq:global_conv} and the continuity of $\bI(P_{X_1X_2})$ that there exists $\delta>0$
such that either $R_{1,n} \le R_1^* - \delta$ or $R_{1,n} + R_{2,n} \le R_1^* + R_2^* - \delta$
for sufficiently large $n$, in contradiction with the convergence of $(R_{1,n},R_{2,n})$
to $(R_1^*,R_2^*)$ implied by \eqref{eqn:second_def0}.

\subsubsection{Completion of the Proof for Cases (i) and (ii)}

Here we handle distributions $P_{X_1X_2,\infty}$ corresponding to the
cases in \eqref{eqn:L0_1}--\eqref{eqn:L0_12}.  We focus on case (ii), 
since case (i) is handled similarly.  

It is easily verified from \eqref{eqn:psiinv} that each point $\bz$ in $\Psi^{-1}(\bV,\eps)$ 
satisfies $z_1 + z_2 \le \sqrt{V_{12}}\Phi^{-1}(\eps)$.  We can thus weaken \eqref{eq:global_conv} to
\begin{align}
    R_{1,n}+R_{2,n} \le I_{12}(P_{X_1X_2,n}) + \sqrt{\frac{V_{\infty,12}}{n}}\Phi^{-1}(\eps) + g(n).
\end{align}
We will complete the proof by showing that 
$I_{12}(P_{X_1X_2,n}) \le R_1^* + R_2^*$ for all $n$. 
Since $\bigcup_{P_{X_1X_2}}\big\{I_{12}(P_{X_1X_2})\big\}$ is the set of all achievable (first-order) sum rates,
it suffices to show that any boundary point corresponding to \eqref{eqn:L0_12} is
one maximizing the sum rate.  We proceed by establishing that this is true.

The conditions stated before \eqref{eqn:L0_12} state that $(R_1^*,R_2^*)$ lies on
the diagonal part of the achievable trapezium corresponding to $P_{X_1X_2}$, and away
from the corner point.  It follows that $\bp_1 := (R_1^* - \delta,R_2^* + \delta)$ and
$\bp_2 := (R_1^* + \delta,R_2^* - \delta)$ are achievable for sufficiently small $\delta$.
If another point $\bp_0$ with a strictly higher sum rate were achievable, then all points
within the triangle with corners defined by $\bp_0$, $\bp_1$ and $\bp_2$ would also 
be achievable.  This would imply the achievability of $(R_1^* + \delta',R_2^* + \delta')$
for sufficiently small $\delta'>0$, which contradicts the assumption that $(R_1^*,R_2^*)$ is a boundary point.


\subsubsection{Completion of the Proof for Case (iii))}

We now turn to the remaining case in \eqref{eqn:L0_both}, corresponding to 
$\bI_{\infty} = \bR^*$.  Again using the fact that $\bI(P_{X_1X_2})$ 
is continuous in $P_{X_1X_2}$, we have 
\begin{align}
    \bI_n = \bR^* + \bDelta(P_{X_1X_2,n}), \label{eqn:Idiff}
\end{align}
where $\|\bDelta(P_{X_1X_2,n})\|_{\infty} \to 0$.  We claim 
that $\bDelta(P_{X_1X_2,n}) \in \calV(R_1^*,R_2^*)$ 
(see \eqref{eqn:setV}). Indeed, if this were not the case, then 
\eqref{eqn:Idiff} would imply that the pair $(I_{n,1},I_{n,12}-I_{n,1})$ 
lies outside the capacity region, in contradiction with the fact
that rates satisfying \eqref{eqn:cap_region1}--\eqref{eqn:cap_region2}
are (first-order) achievable for {\em all} $P_{X_1X_2}$.

Assuming for the time being that $\|\bDelta(P_{X_1X_2,n})\|_{\infty} = O\big(\frac{1}{\sqrt n}\big)$,
we immediately obtain the outer bound
\begin{align}
    & \calL(\eps,R_{1}^{*},R_{2}^{*})\subseteq\bigg\{(L_{1},L_{2})\,:\, \nn \\
    & \bL\in\bigcup_{P_{X_{1}X_{2}}\in\Pi,\bT\in\calV}\Big\{\Psi^{-1}(\bV(P_{X_{1}X_{2}}),\eps)+\bT\Big\}\bigg\}. \label{eqn:alt_form}
\end{align}
The set in \eqref{eqn:alt_form} clearly includes $\calL_0$
in \eqref{eqn:L0_both}.  We proceed by showing that the reverse inclusion
holds, and hence the two sets are identical.  Since $\hat{\bT}_{-}$
and $\hat{\bT}_{+}$ are tangent vectors, any vector $\bT\in\calV$
can have one or more of its components increased to yield a vector $\bT'$
whose direction coincides with either $\bT_{-}$ or $\bT_{+}$.  The fact
that the magnitude of $\bT'$ may be arbitrary is captured by the unions
over $\beta\ge0$ in \eqref{eqn:L0_both}.

It remains to handle the case that 
$\|\bDelta(P_{X_1X_2,n})\|_{\infty}$ is not $O\big(\frac{1}{\sqrt n}\big)$. 
By performing another pass to a subsequence of block lengths if necessary, we can assume that
$\|\bDelta(P_{X_1X_2,n})\|_{\infty} = \omega\big(\frac{1}{\sqrt n}\big)$.
Such scalings can be shown to play no role in characterizing $\calL$, similarly to
\cite{ScaTan13}; we provide only an outline here.  
Let $\Delta_{n,1}$ and $\Delta_{n,12}$ denote the entries of $\bDelta(P_{X_1X_2,n})$,
and let $\Delta_{n,2} := \Delta_{n,12} - \Delta_{n,1}$.  If either $\Delta_{n,1}$
or $\Delta_{n,2}$ is negative and decays with a rate $\omega\big(\frac{1}{\sqrt n}\big)$, then
no value of the corresponding $L_j$ ($j\in\{1,2\}$) can satisfy 
the condition in \eqref{eqn:second_def0}, so the converse is trivial.  
On the other hand, if either $\Delta_{n,1}$ or $\Delta_{n,2}$ is 
positive and $\omega\big(\frac{1}{\sqrt n}\big)$, we simply recover the right-hand side
of \eqref{eqn:alt_form} in the limiting case that either $T_1$ or
$T_{12} - T_1$ (where $\bT := [T_1~~T_{12}]^T$) grows unbounded.  Thus,
the required converse statement for this case is already captured by \eqref{eqn:alt_form}. 

 
\subsubsection*{Acknowledgments}
The second author  is supported by NUS startup grants WBS R-263-000-A98-750/133. 
\bibliographystyle{unsrt}
\bibliography{isitbib}
 
\end{document}